\let\oldReturn\Return
\renewcommand{\Return}{\State\oldReturn}
\DeclareMathOperator{\proj}{proj}
\newcommand{\vct}{\mathbf}
\newcommand{\vctproj}[2][]{\proj_{\vct{#1}}\vct{#2}}
\newtheorem{theorem}{Theorem} 
\newtheorem{lemma}[theorem]{Lemma}
\newcommand{\KLD}[2]{\mathbf{D}_{\mathrm{KL}} \left( #1 \| #2 \right) }
\definecolor{darkgreen}{rgb}{0,0.5,0}
\newif\ifuseboldmathops
\newif\ifuseittextabbrevs
\newcommand{\ie}{{\it i.e.}}
\newcommand{\ie}{i.e.}
\newcommand{\reals}{\mathbf{R}}
\newcommand{\reals}{\mathbb{R}}
\newcommand{\Expect}{\mathop{\bf E{}}\nolimits}
\newcommand{\Expect}{\mathop{\mathbb{E}{}}\nolimits}
\newcommand{\argmax}{\mathop{\mathrm{argmax}}}
\newcommand{\sink}{\mathsf{sink}}
\newcommand{\abs}[1]{\lvert#1\rvert}
\newcommand{\dist}[1]{\mathsf{Dist}(#1)}
\newcommand{\supp}{\mbox{Supp}}
\renewcommand{\Pr}{\mathbf{Pr}}
\DeclareMathOperator*{\optmins}{\textrm{min.}}
\DeclareMathOperator*{\optmaxs}{\textrm{max.}}
\DeclareMathOperator*{\optsts}{\textrm{s.t.}}
\theoremstyle{definition}
\newtheorem{definition}{Definition}
\newtheorem{problem}{Problem}
\newtheorem{remark}{Remark}
\newtheorem{assumption}{Assumption}
\acrodef{mdp}[MDP]{Markov decision process}
\acrodef{scltl}[scLTL]{syntactically co-safe LTL}
\acrodef{dfa}[DFA]{deterministic finite-state automaton}
\acrodef{ssp}[SSP]{stochastic Shortest Path}
\acrodef{ids}[IDS]{intrusion detection system}
\acrodef{os}[OS]{operation system}
\acrodef{cvss}[CVSS]{common vulnerability scoring system}
\acrodef{milp}[MILP]{mixed-integer linear program}
\acrodef{mip}[MIP]{mixed integer programming}
\acrodef{kl}[KL]{Kullback--Leibler}
\acrodef{maxent}[MAXENT]{Maximum Entropy}
\acrodef{irl}[IRL]{Inverse Reinforcement Learning} 
\title{Optimal Decoy Resource Allocation for Proactive Defense in Probabilistic Attack Graphs}
\author{Haoxiang Ma}
\affiliation{
  \institution{University of Florida}
  \city{Gainesville}
  \country{United State}}
\email{hma2@ufl.edu}
\author{Shuo Han}
\affiliation{
  \institution{University of Illinois Chicago}
  \city{Chicago}
  \country{United State}}
\email{hanshuo@uic.edu}
\author{Nandi Leslie}
\affiliation{
  \institution{Raytheon Technologies}
  \city{Arlington County}
  \country{United State}}
\email{nandi.o.leslie@raytheon.com}
\author{Charles Kamhoua}
\affiliation{
  \institution{U.S. Army Research Laboratory}
  \city{Gainesville}
  \country{United State}}
\email{charles.a.kamhoua.civ@mail.mil}
\author{Jie Fu}
\affiliation{
  \institution{University of Florida}
  \city{Gainesville}
  \country{United State}}
\email{fujie@ufl.edu}
\begin{abstract}
This paper investigates the problem of synthesizing proactive defense systems in which the defender can allocate deceptive targets and modify the cost of actions for the attacker who aims to compromise security assets in this system. We model the interaction of the attacker and the system using a formal security model-- a probabilistic attack graph. By allocating fake targets/decoys, the defender aims to distract the attacker from compromising true targets. By increasing the cost of some attack  actions, the defender aims to discourage the attacker from committing to certain policies and thereby improve the defense. To optimize the defense given limited decoy resources and operational constraints, we formulate the synthesis problem  as a bi-level optimization problem, while the defender designs the system, in anticipation of the attacker's best response given that the attacker has disinformation about the system due to the use of deception. Though the general formulation with bi-level optimization is NP-hard, we show that under certain assumptions, the problem can be transformed into a constrained optimization problem. We proposed an algorithm to approximately solve this constrained optimization problem using a novel, incentive-design method for projected gradient ascent. We demonstrate the effectiveness of the proposed method using extensive numerical experiments.
\end{abstract}
\keywords{Attack Graph, Deception, Markov Decision Process}
\newcommand{\BibTeX}{\rm B\kern-.05em{\sc i\kern-.025em b}\kern-.08em\TeX}
\begin{document}

\pagestyle{fancy}
\fancyhead{}

\maketitle

\section{Introduction}

Proactive defense refers to a class of defense mechanisms for the defender to detect any ongoing attacks, distract the attacker with deception, or use randomization to increase the difficulty of an attack to the system. In this paper, we propose a mathematical framework and solution approach for synthesizing a proactive defense system with  deception.

We start by formulating the attack planning problem using a probabilistic attack graph, which can be viewed as a \ac{mdp} with a set of attack target states. Attack graphs(AGs)\cite{jhaTwoFormalAnalyses2002} can be used in modeling computer networks. They are widely used in network security to identify the minimal subset of vulnerability/sensors to be used in order to prevent all known attacks\cite{noel2008optimal,sheyner2002automated}. Probabilistic attack graphs introduce uncertain outcomes of attack actions that account for action failures in a stochastic environment. For example, in~\cite{hongAssessingEffectivenessMoving2016,hongHARMsHierarchicalAttack2012},   probabilistic transitions in attack graphs  capture uncertainties originated from network-based randomization. Under the probabilistic attack graph modeling framework, we investigate how to allocate decoy resources as fake targets to distract the attacker   into attacking the fake targets, and how to modify the attack action costs to discourage the attacker from reaching the true targets.

The joint design of decoy resource allocation and action cost modification can be cast as a bi-level optimization problem, which is generally NP-hard~\cite{dempe2020bilevel}. Under the assumption that potential decoy states are predefined and the defender only needs to allocate resources/rewards to decoys, we prove the bi-level optimization can be equivalently expressed as a constrained optimization problem. To solve the constrained optimization problem using a projected gradient ascent efficiently, we build two important relations: First, we show that the projection step of a defender's desired attack policy to the set of realizable attack policy space can be performed using  \ac{irl} \cite{ziebart2008maximum}. Essentially,   \ac{irl} is to shape the attacker's \emph{perceived reward} so that the rational attacker will mimic a  strategy chosen by the defender. Second, the gradient ascent step can be performed using  policy improvement, which is a subroutine in policy iteration with respect to maximizing the defender's total reward. The project gradient ascent is ensured to converge to a (local) optimal solution to this  nonconvex constrained optimization problem.


\paragraph*{Related work}
The synthesis of proactive defense strategies studied here is closely related to the Stackelberg security game(SSG) (surveyed in \cite{sinhaStackelbergSecurityGames2018}) and its solution via bi-level optimization. In an SSG, the defender is to protect a set of targets with limited resources, while the attacker selects the optimal attack strategy given the knowledge of the defender's strategy.   In \cite{nguyenMultistageAttackGraph2018}, the authors study security countermeasure-allocation and   use attack graphs  to evaluate the network's security given the allocated resources. 
However, the SSG does not account for the asymmetric information introduced by the use of deception.   In \cite{wu2019reward}, the authors introduce reward shaping to motivate the agent to behave as the target policy. However, in our setting, the target policy may be infeasible, because the defender aims to lure the attacker to reach a fake target, while the attacker may not intentionally avoid true targets.


Deceptions create incorrect/incomplete information to the attacker. In \cite{thakoorCyberCamouflageGames2019}, the authors formulate a security game to allocate limited decoy resources to mask a network configuration from the cyber attacker. The decoy-based deception manipulates the adversary's perception of the payoff matrix. In \cite{Anwar2020}, the authors study honeypot allocation   in  deterministic attack graphs and determine the optimal allocation strategy using the minimax theorem.  In \cite{milaniHarnessingPowerDeception2020}, the authors study  directed acyclic attack graphs that can be modified by the defender using deceptive and protective resources. They propose a \ac{milp}-based algorithm to  determine the allocation of deceptive and protective resources in the graph. In \cite{durkota2015optimal}, they harden the network by using honeypots so that the attacker can not discriminate between a true target and a fake target.  In \cite{milani2020harnessing}, the authors assign fake edges in the attack graph in order to interdict the attacker and employ MILP to find the optimal solution. 

Compared to existing work, our work makes the following contributions: First, we do not assume any graph structure in the attack graph and consider probabilistic attack graphs instead of deterministic ones. As the attacker can take a randomized strategy in the probabilistic attack graph, it is not possible to construct a payoff matrix and apply the minimax theorem for decoy resource allocation.  Second, we consider simultaneously allocating limited decoy resources and modifying the cost of attack actions and analyze the best response of the attacker given the disinformation caused by deception.
Third,   we proposed an efficient incentive-design inspired algorithm for synthesizing the defense strategy Under the assumption that the attacker is rational and can not distinguish decoys from the true targets, by modifying the action reward and allocating decoy resources properly, we show that it is possible to shape the attacker's behavior so that the misperceived attacker is incentivized to commit an attack strategy that maximizes the defender's reward.   Finally, we test the scalability of our method on different problem sizes.

\section{Preliminaries and Problem Formulation}
\paragraph*{Notations}
Let $\reals$ denote the set of real numbers and $\reals^n$ 
the set of real $n$-vectors. 
Let $\reals^n_{>0}$ (resp. $\reals^n_{<0}$) be the set of positive (resp. negative) real $n$-vectors.
We use $\mathbf{1}$ to represent the vector of all ones. Given a vector $z \in \reals^{n}$, let $z_i$ be the $i$-th component. Given a finite set $Z$, the set of probability distributions over $Z$ is represented as $\dist{Z}$. Given $d \in \dist{Z}$, the support of $d$ is denoted as $\supp(d)= \{z\in Z\mid d(z)>0\}$. Let $I_B$ be the indicator function, \ie, $I_B(x) = 1$ if $x \in B$, and $I_B(x)=0$ otherwise.

We consider the adversarial interaction between a defender (player 1, pronoun she/her) and an attacker (player 2, pronoun he/him/his) in a  system equipped with proactive defense (formally defined later). 
We first introduce a formal model, called probabilistic attack graph, to capture  how the attacker plans  to achieve the attack objective. Then, we  introduce proactive defense countermeasures with  deception.

\paragraph*{Attack Planning Problem} The attack planning problem   is modeled as a probabilistic attack graph,
\[
M = (S, A, P, \nu, \gamma, F, R_2),
\]
where $S$ is a set of states (nodes in the attack graph), $A$ is a set of attack actions,  $P: S \times A \rightarrow \dist{S}$ is a probabilistic transition function such that $P(s'|s, a)$ is the probability of reaching state $s'$ given action $a$ being taken at state $s$, $\nu \in \dist{S}$ is the initial state distribution, $\gamma \in (0,1]$ is a discount factor. The attack's objective is described by a set $F$ of \emph{target states} and a \emph{target reward} function $R_2: F \times A \rightarrow \reals_{\ge 0}$, which assigns each state-action pair $(s, a)$ where $s \in F$ and $a\in A$ to a nonnegative value of reaching that target for the attacker. 
The reward function can be extended to the entire state space by defining $R_2(s, a)=0$ for any $s\in S\setminus F, a \in A$. 
To capture the termination of attacks, we introduce a unique sink state $s_\sink \in S\setminus F$ such that $P(s_\sink|s_\sink, a) =1$ for all $a\in A$ and $P(s_\sink|s, a) =1$ for any target $s\in F$ and $a\in A$.

The probabilistic attack graph  characterizes goal-directed attacks encountered in cyber security \cite{lallieReviewAttackGraph2020a,noel2010measuring}, in which by reaching a target state, the attacker compromises certain critical network hosts. Probabilistic attack graphs~\cite{singhal2017security,milaniHarnessingPowerDeception2020} capture the uncertain outcomes of the attack actions using the probabilistic transition function and generalize deterministic attack graphs \cite{jhaTwoFormalAnalyses2002}.

The attacker is to maximize his discounted total reward, starting from the initial state $S_0 \sim \nu$. A randomized, finite-memory attack policy is a function $\pi\colon S^\ast \rightarrow \dist{A}$, which maps a finite run $\rho\in S^\ast $ into a distribution $\pi(\rho)$ over actions. 
A policy is called Markovian   if it only depends on the most recent state, \ie, $\pi \colon S\rightarrow \dist{A}$. 
We only consider Markovian policies because it suffices to search within Markovian policies for an optimal attack policy.

Let $(\Omega, \mathcal{F})$ be the canonical sample space for $(S_0, A_0, (S_t, A_t)_{t>1})$ with the Borel  $\sigma$-algebra $\mathcal{F}=\mathcal{B}(\Omega)$ and $\Omega = S\times A\times \prod_{t=1}^\infty(S\times A)$.  The probability measure $\Pr^\pi$ on $(\Omega, \mathcal{F})$ induced by a Markov policy $\pi$ satisfies: $\Pr^\pi(S_0=s) = \mu_0(s)$, $\Pr^\pi(A_0=a\mid S_0=s)  =\pi(s,a)$, and $\Pr^\pi(S_t=s\mid (S_k, A_k)_{k<t}) = P(s\mid S_k ,A_k )$, and $\Pr^\pi(A_t=a\mid (S_k, A_k)_{k<t}, S_t) = \pi(S_t,a)$.

Given a Markovian policy $\pi \colon S \to \dist{A}$, we define the attacker's value function  $V_{2}^{\pi}: S \rightarrow \reals$ as
\[
V_{2}^{\pi}(s) = \Expect_{\pi}[\sum\limits_{k = 0}^{\infty}\gamma^{k}R_2(S_k, A_k)|S_0 =s],
\] where $\Expect_{\pi}$ is the expectation given the probability measure $\Pr^\pi$.

 \paragraph*{Proactive Defense with Deception} 
 We assume that the defender knows the attacker's objective given by the tuple $\langle  F, R_2\rangle$, \ie, the target states and target reward function. The defender's proactive defense mechanisms are the following:
 
 \begin{itemize} 
 \item Defend by deception: The defender employs a deception method called ``revealing the fake''. Specifically, the defender has a set $D   \subset S \setminus F $ of states in the \ac{mdp} $M$  that can be set to be \emph{fake target states} with  fake target rewards $\vec{y}\in \reals^{\abs{D}}$.  The attacker cannot distinguish the real targets $F$ from fake targets $D$. 
    \item Defend by state-action reward modification: The defender has a set $W \subset (S \setminus (F\cup D)) \times A$ of state action pairs in the MDP $M$ whose reward can be modified. Once the reward of the state action pair $(s, a)$ is modified, the attacker's perceived reward $R_2(s, a) < 0$, \ie, the cost of attack action $a$ at state $s$ is $-R_2(s, a)$.
    \end{itemize}
    The defender can determine how to allocate her decoy resource and limited state-action reward modification ability.

\begin{definition}[Decoy allocation   under constraints]
The defender's decoy allocation design is a nonnegative  real-valued vector $\vec{y}  \in \reals_{\ge 0}^{\abs{S}}$ satisfying $\vec{y}(s) = 0$ for any $s\in S\setminus D$ and constrained by   $\mathbf{1}^{\mathsf{T}}\vec{y} \le h$ for some $h\ge 0$. Given a decoy allocation   $\vec{y}$, the attacker's \emph{perceptual reward function} is defined by 
\begin{equation*}
\label{eq:misperceived-reward-decoy}
     R_2^{\vec{y}}(s, a) =  \left\{
    \begin{array}{ll}
    \vec{y}(s) & \text{ if }  \vec{y}(s) > 0,\\
R_2(s, a) & \text{ if } \vec{y}(s) = 0.
\end{array}\right.
\end{equation*}
\end{definition}

\begin{definition}[Action reward modification]
Given a set $W \subset (S \setminus (F\cup D)) \times A$, the defender's action reward modification is a nonpositive reward-valued vector $\vec{x} \in \reals_{\le 0}^{\abs{S \times A}}$ satisfying $\vec{x}(s, a)  = 0$ for any $(s, a) \notin W$. Given an action reward modification $\vec{x}$, the attacker's perceptual reward function is defined by
\begin{equation*}
\label{eq:misperceived-reward-action}
     R_2^{\vec{x}}(s, a) =  \left\{
    \begin{array}{ll}
    \vec{x}(s, a) & \text{ if }  \vec{x}(s, a) < 0,\\
R_2(s, a) & \text{ if } \vec{x}(s, a) = 0.
\end{array}\right.
\end{equation*}
\end{definition} 

Note that the defender does not consider modifying the state-action reward for (fake or real) target states $F\cup D$ because once a state in $F\cup D$ is reached, the attack is terminated.


\begin{definition}
  The defender's \emph{proactive defense strategy} is a tuple $(\vec{x},\vec{y})$ including an action reward modification $\vec{x}$ and a decoy allocation design $\vec{y}$. 
\end{definition}

Because the action reward modification is independent of the decoy allocation design, the reward function given a defender's strategy $(\vec{x},\vec{y})$ is the composition of $R_2^{\vec{x}}$ and $R_2^{\vec{y}}$ and thus omitted.

\begin{assumption} The attack process terminates under two cases: Either the attack succeeds, in which the attacker reaches a target  $s\in F$, or the attack is interdicted, in which the attacker reaches a state allocated with a decoy.
\end{assumption}

Our problem can be informally stated as follows. 

\begin{problem}
\label{pro: Optimization}
In the attack planning scenario we mentioned above, determine the defender's  strategy to allocate decoy resources and modify action reward so as to maximize the probability that the attacker reaches a fake target  given the best response of the attacker.  

\end{problem}

\section{Main Results}
In this section, we first define the attacker's perceptual planning problem for a fixed action reward modification and decoy resource allocation    $(\vec{x}, \vec{y})$. Then we show that the design of the proactive defense can be formulated as a bi-level optimization problem. We investigate the special property of the formulated bi-level optimization problem to develop an optimization-based approach for synthesizing the proactive defense strategy.


\subsection{A Bi-level Optimization Formulation} 
The defender's  strategy changes how the attacker perceives the attack planning problem as follows:

\begin{definition}[Perceptual attack planning problem with modified reward and decoys]
Let the action reward modification be $\vec{x}$ and decoy allocation be $\vec{y}$, and the attacker's original   planning problem  $M= (S, A, P, \nu, \gamma, F, R_2)$, the perceptual planning problem of the attacker is defined by the following \ac{mdp} with terminating states:
\[
 M(\vec{x},\vec{y}) = (S, A,  P^{\vec{y}},   \nu, \gamma,F \cup D^{\vec{y}}, R_2^{\vec{x},\vec{y}}),
\]
where $S,A, \nu, \gamma$ are the same as those in $M$, $D^{\vec{y}}=\{s \in D \mid \vec{y}(s)\ne 0\}$ are decoy target states and absorbing. The transition function $P^{\vec{y}}$ is obtained from the original transition function $P$ by only making all states in $D^{\vec{y}}$ absorbing. The reward $R_2^{\vec{x},\vec{y}}$ is defined based on Def.~\ref{eq:misperceived-reward-decoy} and Def.~\ref{eq:misperceived-reward-action}.
\label{def:AttackPerceptualMDP}
\end{definition}

The perceptual value for the attacker is \[
V_2^{\pi}(\nu;\vec{x},\vec{y}) = \Expect_\pi \Bigl[
\sum_{k=0}^{\infty} \gamma^k R_2^{\vec{x},\vec{y}}(S_k, A_k) \mid S_0 \sim \nu \Bigr],
\]
where $\Expect_\pi$ is the expectation given the probability measure $\Pr^\pi$ in duced by $\pi$ from the \ac{mdp} $M(\vec{x},\vec{y})$.  

The defender's deception objective is given by a  reward function 
 $R_1^{\vec{y}}:S\rightarrow \reals $, defined by 
\begin{equation}
\label{eq:defender_reward}
       R_1^{\vec{y}}(s) =\begin{cases}
    1 & \text{if $s\in D^{\vec{y}}$},\\
    0 & \text{otherwise}.
        \end{cases}
\end{equation}

Given the probability measure $\Pr^\pi$, we denote the defender's value by \[ 
    V_1^{\pi}(\nu;\vec{y} ) = \Expect_\pi \Bigl[
\sum_{k=0}^\infty \gamma^k R_1(S_k)\mid S_0\sim \nu 
\Bigr].
\] 

With this reward definition, the value  $V_1^\pi(\nu; \vec{y})$ is the probability of the attacker reaching a fake target in $D^{\vec{y}}$.

To formalize the deception objective, we introduce the notion of a defender's preferred attack policy as follows.

\begin{definition}[A defender's preferred attack policy]
Given the perceptual   planning problem of the attacker $M(\vec{x},\vec{y})$ where $(\vec{x},\vec{y})$ is a fixed proactive defense strategy, 
let  $\pi$ and $\pi'$ be two attack policies that achieve the same value for the attacker, \ie, $V_2^\pi(\nu; \vec{x},\vec{y}) = V_2^{\pi'}(\nu; \vec{x},\vec{y})$. Policy $\pi$ is strictly preferred to $\pi'$ by the defender if and only if 
\[
V_1^\pi(\nu; \vec{y} ) > V_1^{\pi'}(\nu; \vec{y}).
\]
 \end{definition}
 In words, if two policies are equally good for the attacker, the one with a higher probability to reach a fake target is preferred by the defender.

Then the problem of synthesizing an optimal proactive defense strategy $(\vec{x},\vec{y})$ can be mathematically formulated as 
\begin{problem}
\begin{alignat*}{2}
 & \optmaxs_{\vec{x}\in X,\vec{y}\in Y} &  & V_{1}^{\pi^{\ast}}(\nu ; \vec{y})\\\
 & \optsts &\quad  & \pi^{\ast}\in \argmax_\pi   V_{2}^{\pi}(\nu;\vec{x} ,\vec{y}   ). 
\end{alignat*}
\label{pro:subproblem2}
where $X =\reals^{|W|}_{\le 0}$ and $Y =\{\vec{y}\mid\forall s\in S\setminus D, \vec{y}(s)= 0 \text{ and }\mathbf{1}^{\mathsf{T}}\vec{y} \le h  \}$ are the ranges for variables $\vec{x}$ and $\vec{y}$ correspondingly.
\end{problem}

 In words, the defender decides $(\vec{x},\vec{y})$ so that  the attacker's  best response in his perceptual attack planning problem turns out to be an attack policy most preferred by the defender, as it maximizes the defender's value.

\subsection{Transforming into a Constrained Optimization Problem}

The bi-level optimization problem is known to be strongly NP-hard \cite{hansen1992new}.
However, under certain conditions, the bi-level optimization problem can be shown to be equivalent to a constrained optimization problem. 



Let $\Pi(\vec{x}, \vec{y}) =  \{\pi\mid V_{2}^{\pi}(\nu;\vec{x} ,\vec{y}) = \max_\pi    V_{2}^{\pi}(\nu;\vec{x} ,\vec{y})\}$
, which is the set of optimal policies in the attacker's perceived planning problem with respect to  a choice of variables $\vec{x}$ and $\vec{y}$. The bi-level optimization problem is then equivalently written  as the following constrained optimization problem:
\begin{alignat}{2}
 & \optmaxs_{\pi^\ast, \vec{x} \in X,\vec{y} \in Y} &  & V_{1}^{\pi^{\ast}}(\nu;  \vec{y} ) \nonumber \\
 & \optsts &\quad  & \pi^{\ast}\in  \Pi(\vec{x}, \vec{y}). \label{pro:bilevel-1}
\end{alignat}
This, in turn, is equivalent to \begin{alignat}{2}
 & \optmaxs_{\pi^\ast} &  & V_{1}^{\pi^{\ast}}(\nu;  \vec{y} ) \nonumber \\
 & \optsts &\quad  & \pi^{\ast}\in \bigcup_{\vec{x} \in X, \vec{y} \in Y} \Pi(\vec{x}, \vec{y}). \label{pro:bilevel-2}
\end{alignat}
Here, the constraint means the attacker's response $\pi^\ast$ can be selected from the collection of optimal attack policies given  all possible values for $\vec{x}$, $\vec{y}$.

By the definition of the defender's value function, it is noted that $V_1^\pi(\nu; \vec{y})$ does not depend on the exact value of $\vec{y}$ but only depends on whether $\vec{y}(s)>0$ for each state $s\in D$. Formally,
\begin{lemma}
For any $\vec{y}_1,\vec{y}_2 \in Y$, if $\vec{y}_1 (s)= 0 \implies \vec{y}_2(s)=0$ and vice versa, then $V_1^\pi(\nu; \vec{y}_1) = V_1^\pi(\nu;\vec{y}_2)$.
\end{lemma}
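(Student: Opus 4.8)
The plan is to show that the hypothesis on $\vec{y}_1$ and $\vec{y}_2$ forces the two decoy target sets to coincide, and then to observe that the defender's value depends on $\vec{y}$ only through this set. First I would unpack the hypothesis: the condition ``$\vec{y}_1(s) = 0 \implies \vec{y}_2(s) = 0$ and vice versa'' is exactly the statement that $\{s \mid \vec{y}_1(s) \ne 0\} = \{s \mid \vec{y}_2(s) \ne 0\}$. By Definition~\ref{def:AttackPerceptualMDP}, these are precisely the decoy target sets $D^{\vec{y}_1}$ and $D^{\vec{y}_2}$, so the hypothesis gives $D^{\vec{y}_1} = D^{\vec{y}_2}$; call this common set $D^\ast$.

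Next I would argue that every object entering the definition of $V_1^\pi(\nu; \vec{y})$ depends on $\vec{y}$ only through $D^{\vec{y}}$. There are two such objects. The first is the defender's reward function $R_1^{\vec{y}}$: by Equation~\eqref{eq:defender_reward}, $R_1^{\vec{y}}(s) = I_{D^{\vec{y}}}(s)$, so $R_1^{\vec{y}_1} = R_1^{\vec{y}_2}$ since $D^{\vec{y}_1} = D^{\vec{y}_2}$. The second is the probability measure $\Pr^\pi$ under which the expectation is taken; this is induced by $\pi$ on the perceptual MDP, whose transition function $P^{\vec{y}}$ is obtained from $P$ by making exactly the states in $D^{\vec{y}}$ absorbing. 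Since $D^{\vec{y}_1} = D^{\vec{y}_2} = D^\ast$, we get $P^{\vec{y}_1} = P^{\vec{y}_2}$, and hence for a fixed Markov policy $\pi$ the induced measures $\Pr^\pi$ on the canonical sample space agree (they are determined by $\nu$, $\pi$, and the transition function, all of which now coincide). Note the value $V_1^\pi$ does not involve the attacker's reward $R_2^{\vec{x},\vec{y}}$ at all, so the difference in the magnitudes of $\vec{y}_1$ and $\vec{y}_2$ is irrelevant.

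Combining these, for any Markov policy $\pi$,
\[
V_1^\pi(\nu; \vec{y}_1) = \Expect_\pi\Bigl[\sum_{k=0}^\infty \gamma^k R_1^{\vec{y}_1}(S_k) \mid S_0 \sim \nu\Bigr] = \Expect_\pi\Bigl[\sum_{k=0}^\infty \gamma^k R_1^{\vec{y}_2}(S_k) \mid S_0 \sim \nu\Bigr] = V_1^\pi(\nu; \vec{y}_2),
\]
where both expectations are with respect to the same measure $\Pr^\pi$ on the same MDP (with decoy set $D^\ast$ and transition function $P^{\vec{y}_1} = P^{\vec{y}_2}$). This is the claimed equality. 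I do not anticipate a genuine obstacle here; the only point requiring a little care is to state cleanly that $V_1^\pi$ is a functional of the triple $(D^{\vec{y}}, P^{\vec{y}}, R_1^{\vec{y}})$ and that all three are determined by $D^{\vec{y}}$ alone — in particular that $P^{\vec{y}}$ is a deterministic function of which states are made absorbing, independent of the reward values assigned to them. Everything else is bookkeeping with the definition of $V_1^\pi$.
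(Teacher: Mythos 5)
Your proof is correct and follows essentially the same route as the paper's: both argue that the hypothesis forces $D^{\vec{y}_1}=D^{\vec{y}_2}$, hence the transition functions $P^{\vec{y}_1}=P^{\vec{y}_2}$ and the defender's rewards $R_1^{\vec{y}_1}=R_1^{\vec{y}_2}$ coincide, so the values agree. Your version is slightly more explicit about why the induced measures $\Pr^\pi$ agree and why the magnitudes of $\vec{y}$ are irrelevant, but there is no substantive difference.
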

\begin{proof}
Given two different vectors $\vec{y_1}$ and $\vec{y_2}$, we can construct two \ac{mdp}s: $M_1 \coloneqq M (\vec{x},\vec{y_1}) = (S, A, P^{\vec{y_1}}, \nu, \gamma, F, R_1)$ and $M_2 \coloneqq M(\vec{x},\vec{y_2}) = (S,A, P^{\vec{y_2}}, \nu, \gamma, F, R_1)$, respectively. 

If $\vec{y_1}(s) = 0 $ if and only if $\vec{y_2}(s) = 0$, then the transition functions $P^{\vec{y}_1}$ of $M_1$ and $P^{\vec{y}_2}$ of $M_2$ are the same (see Def.~\ref{def:AttackPerceptualMDP}). 

Further, the defender's reward function $R_1^{\vec{y}_1} $ also equals to $R_1^{\vec{y}_2} $ (see \eqref{eq:defender_reward}), given both the transition dynamics and reward are the same, we have  $V_1^\pi(\nu; \vec{y}_1) = V_1^\pi(\nu;\vec{y}_2)$.
\end{proof}

Next, to remove   the dependency of $V_1^\pi(\nu; \vec{y})$ on $\vec{y}$, we make the following assumption:

\begin{assumption}
The set $D^{\vec{y}} = \{s\in D\mid \vec{y}(s)\ne 0\}$ of states where decoys are allocated is given.
\end{assumption}
Under this assumption, we simply assume all states in the given set $D$ have to be assigned with nonzero decoy resources. That is $D^{\vec{y}} =D$.  

This assumption further reduces the defender's synthesis problem into    a constrained optimization problem.

\begin{alignat}{2}
 & \optmaxs_{\pi^\ast} &  & V_{1}^{\pi^{\ast}}(\nu) \nonumber \\
 & \optsts &\quad  & \pi^{\ast}\in  \overline{\Pi} \triangleq   \bigcup_{\vec{y} \in Y, \vec{x}\in X} \Pi(\vec{x}, \vec{y}), \nonumber \\
  & &\quad & \vec{y}(s) >0, \forall s \in D. \label{eq:decoy-pos} 
\end{alignat}

Because the above problem is a standard constrained optimization problem, one can obtain a locally
optimal solution using the projected gradient method:
\[
\pi^{k+1} =  \vctproj[\overline{\Pi}](\pi^{k} + \eta\nabla V_{1}^{\pi^{k}}(\nu)).
\]
where $\vctproj[\overline{\Pi}](\pi)$ denotes projecting policy $\pi$ onto the policy space $\overline{\Pi}$ and $\eta$ is the step size.  

\subsection{Connecting Inverse-reinforcement Learning with Project Gradient Ascent}
A key step in performing projected gradient ascent is to evaluate, for any policy $\hat \pi$, the projection $ \vctproj[\overline{\Pi}](\hat \pi)$. However, this is nontrivial because the set $\bar \Pi$ includes a set of attack policies, each of which corresponds to a choice of vectors $(\vec{x},\vec{y})$. As a result, $\bar \Pi$
does not have a compact  representation. Next, we propose a novel algorithm that computes the projection. 

First, by the definition of projection, it is noted that 
this projection  step is equivalent to solving the following optimization problem:

\begin{alignat}{2}
 & \optmins_{\pi}  &  & \mathbf{D}(\pi,  \hat{\pi} ) \nonumber \\ 
 & \optsts &\quad  & \pi\in \overline{\Pi},\nonumber \\
   & &\quad & \vec{y}(s) >0; \forall s \in D.
\label{eq:gradient_descent_step}
\end{alignat}
where $\mathbf{D}(\pi,  \hat{\pi})$ is the  distance between the two policies $\pi, \hat \pi$.


The distance function $\mathbf{D}$ can be chosen to be the \ac{kl}-divergence between policy-induced Markov chains, defined as follows.
\begin{definition}
Given an \ac{mdp} $M=(S,A,P,\nu)$ and two Markovian policies $\pi_1$, $\pi_2$. Let $M_{\pi_1}=(S, P_1,\nu)$ and $M_{\pi_2}=(S, P_2,\nu)$ be two Markov chains induced from $M$ under $\pi_1$ and $\pi_2$, respectively. The \ac{kl} divergence $
\KLD{M_{\pi_1}}{M_{\pi_2}}  $ (relative entropy from $M_{\pi_2}$ to $M_{\pi_1}$) is defined by
\[
\KLD{M_{\pi_1}}{M_{\pi_2}} = \sum\limits_{\rho \in S^\ast }\Pr_1(\rho) \log\frac{\Pr_1(\rho)}{\Pr_2(\rho)},
\]
where $\Pr_i(\rho)$ is the probability of a path $\rho$ in the Markov chain $M_{\pi_i}$ for $i=1,2$.
\end{definition}

The \ac{kl} divergence in~\eqref{eq:gradient_descent_step} can be expressed as 
\begin{multline}
\label{eq:kl-expression}
\KLD{M_{\pi}(\vec{x} ,\vec{y})}{M_{\widehat \pi }(\vec{x} , \vec{y})}  = \sum_{\rho} \widehat \Pr(\rho ) \log \frac{\widehat \Pr(\rho )}{\Pr(\rho |\vec{x}, \vec{y})}\\
= \sum_{\rho} \widehat \Pr(\rho ) \log\widehat \Pr(\rho ) - \sum_{\rho} \widehat \Pr(\rho ) \log\Pr(\rho |\vec{x}, \vec{y}),
\end{multline}
where $\widehat \Pr(\rho) $ is the probability of path $\rho$ in the Markov chain $M_{\widehat \pi }(\vec{x} , \vec{y})$, and $\Pr(\rho | \vec{y})$  is   the probability of path $\rho$ in the Markov chain $M_{\pi}(\vec{x} ,\vec{y})$ induced by a policy $\pi$.

Because the first term in the sum in~\eqref{eq:kl-expression} is a constant for $\hat \pi$ is fixed, the \ac{kl} divergence minimization problem is equivalent to the following maximization problem: 
\begin{alignat}{2}
\label{eq:decoy-irl}
&\optmaxs_{\vec{x}\in X, \vec{y}\in Y} &\quad & \sum_{\rho} \widehat \Pr(\rho ) \log \Pr(\rho |\vec{x}, \vec{y})\\
&\optsts & &   \vec{y}(s) >0; \forall s \in D  ,\\
  & & & \mathbf{1}^{\mathsf{T}}\vec{y} \le h. \label{eq:decoy-resource}
\end{alignat}
Problem~\eqref{eq:decoy-irl} can be solved by an extension of the \ac{maxent}  \ac{irl} algorithm~\cite{ziebart2008maximum}, which was originally developed in the absence of constraints. It is well-known that \ac{irl} is to infer, from the expert demonstration, a reward function for which the expert policy generating the demonstrations is optimal. The use of \ac{irl} to perform the projection is intuitively understood as follows: The goal is to compute a pair of vectors $(\vec{x},\vec{y})$ that alters the attacker's perceived reward function so that the attacker's optimal policy given $(\vec{x},\vec{y})$ is closed to the ``expert policy'' $\hat \pi$, under the constraints of $\vec{y}$.

To handle the decoy resource constraint~\eqref{eq:decoy-resource}, we approximate the constraint using a logarithmic barrier function and
compute the optimal solution $\vec{y}^\ast$ using gradient-based numerical optimization.

Considering the constraint $\mathbf{1}^{\mathsf{T}}\vec{y} \le h$, we implement the barrier function in order to approximate the inequality constraints and rewrite the optimization problem as:
 \begin{align*}
     &\max\limits_{\vec{x},\vec{y}}\sum\limits_{\rho}\widehat \Pr(\rho ) \log \Pr(\rho |\vec{x}, \vec{y}) + \frac{1}{t}\log(h - \mathbf{1}^{\mathsf{T}}\vec{y}) \\
     &\text{ subject to: } \vec{y}(s)=0, \quad \forall s\in S\setminus D.
 \end{align*}
where $t$ is the weighting parameter of the logarithmic barrier function. In our experiment, $t$ is fixed to be $1000$. 
 
 Since constraint $\vec{y}(s)=0, \forall s\in S\setminus D$, can be   incorporated into the domain of decision variables $\vec{y}$, we can use gradient ascent to obtain the optimal $\vec{x}^{\ast}, \vec{y}^{\ast}$ that maximizes the objective function.
Specifically, $\vec{x}$ and $\vec{y}$ can be updated via $\vec{x}^{k+1} = \vctproj[X](\vec{x}^{k} + \eta_{x}\nabla L(\vec{x}, \vec{y}))$, $\vec{y}^{k+1} =\vctproj[Y](\vec{y}^{k} + \eta_{y}\nabla L(\vec{x}, \vec{y}))$.

\subsection{Policy Improvement for Gradient Ascent Step}

After the projection step to obtain a policy $\pi^k$ and the corresponding vector $(\vec{x},\vec{y})$, we aim to compute a one-step gradient ascent to improve the objective function's value 
\[V_1^{k+1} (\nu)= V_{1}^{k}(\nu) + \nabla V_{1}^{k}(\nu), \]
where $V_1^k(\nu)$ is the defender's value evaluated given the  attack policy $\pi^k$ at the $k$-th iteration.

For this step, we perform a policy improvement step with respect to the defender's reward function $R_1^{\vec{y}}$, which now is independent of $\vec{y}$ because the set $D^{\vec{y}}$ is fixed to be a constant set $D$. It is shown in \cite{puterman2014markov,Omid2022aaai} that policy improvement is a one-step Newton update of optimizing the value function.  

Specifically, the policy improvement is to compute 
\[
\tilde \pi^{k+1}(s,a) = \frac{\exp{((R_1(s, a)+\gamma V_1^{k}(s'))/\tau)}}{\sum_{a \in A}\exp{((R_1(s, a)+\gamma V_1^{k}(s'))/\tau)}},
\]


The policy at iteration $k+1$ is obtained by performing the projection step (\eqref{eq:gradient_descent_step}) in which $\hat \pi \triangleq \tilde \pi_{k+1}$.

The iteration stops when $|V_1^{k+1}(\nu) - V_1^{k}(\nu)| \le \epsilon$ where $\epsilon$ is a manually defined threshold.  
The output yields a tuple $(\vec{x}^\ast, \vec{y}^\ast)$ which is the (local) optimal proactive defense strategy. We can only obtain a local optimal proactive defense strategy here due to the transferred constrained optimization problem having a nonconvex constraint set. However, we can start from different initial policies and select the best one. Moreover, assume the defender is solving her own problem without considering attacker's objective. the upper bound of the defender's objective can be obtained. We can select the solution whose objective function is closest to the upper bound. 

\begin{remark}
In our problem, we assume the set $D$ is given. If the set $D$ is not given, then this problem becomes combinatorial. If the set $D$ is not given but to be determined from a candidate set of states. Then a naive approach is to enumerate all possible combinations and evaluate the defender's value for every subset and select the one that yields the highest defender's value. It would be interesting to examine if the combinatorial problem is sub-modular or super-modular, but it is beyond the scope of this work.
\end{remark}

In summary, the proposed algorithm starts with an initial policy $\tilde \pi^{0}$, and use the \ac{irl}  to find the projection $\pi^0$ as well as their corresponding vectors   $(\vec{x}^{0}, \vec{y}^{0})$ that shape the attacker's perceptual reward function for which $\pi^0$ is optimal. Then a policy improvement is performed to update $\pi^0$ to $\tilde \pi^1$. By alternating the projection and policy improvement, the process terminates until the stopping criteria $|V_1^{k+1}(\nu) - V_1^{k}(\nu)| \le \epsilon$ is satisfied.

\section{Experiment}

We  illustrate the proposed methods with two sets of examples, one is a  probabilistic attack graph and another is an attack planning problem formulated in a stochastic gridworld. For all case studies, the workstation used is powered by Intel i7-11700K and 32GB RAM.

\begin{figure}[h]
	\centering
	\resizebox{0.6\linewidth}{!}{
		\begin{tikzpicture}[->,>=stealth',shorten >=1pt,auto,node distance=2  cm, scale =0.5,transform shape, line width = 0.5pt]
		\node[state, initial] (0) {$0$};
		\node[state] (2) [above right of=0] {$2$};
		\node[state] (3) [below right of=0] {$3$};
		\node[state] (1) [above of=2] {$1$};
		\node[state] (4) [below of=3] {$4$};
		\node[state] (5) [right of=1] {$5$};
		\node[state] (6) [right of=4] {$6$};
		\node[state] (7) [right of=3] {$7$};
		\node[state] (8) [right of=2] {$8$};
		\node[state] (9) [above right of=7] {$9$};
		\node[state] (10) [below right of=7] {$10$};
		\node[state] (13) [above right of=9] {$13$};
		\node[state] (12) [below right of=13] {$12$};
		\node[state] (11) [below right of=9] {$11$};
		
		\path (0) edge[line width = 1.2pt, color = black] node {} (1)
		(0) edge[] node {} (2)
		(0) edge[] node {} (3)
		(0) edge[]  node {} (4)
		(1) edge[line width = 1.2pt, color = black] node {} (5)
		(1) edge[] node {} (8)
		(1) edge[] node {} (6)
		(2) edge[line width = 1.2pt, color = black] node {} (6)
		(2) edge[] node {} (7)
		(3) edge[] node {} (5)
		(3) edge[] node {} (7)
		(4) edge[]  node {} (5)
		(4) edge[] node {} (7)
		(6) edge[] node {} (9)
		(6) edge[] node {} (10)
		(7) edge[line width = 1.2pt, color = black] node {} (9)
		(7) edge[] node {} (8)
		(9) edge[] node {} (11)
		(9) edge[] node {} (13)
		(10) edge[] node {} (11)
		(10) edge[bend right] node {} (12)
		(11) edge[bend left, line width = 1.2pt, color = black] node {} (12)
		(11) edge[] node {} (13)
		(12) edge[bend left] node {} (11)
		(12) edge[bend right] node {} (13)
		(13) edge[bend right, line width = 1.2pt, color = black] node {} (12)
		(13) edge[bend right] node {} (11)
		;
		\end{tikzpicture}
	}  
	\caption{A probabilistic attack graph.}
	\label{fig: simple probabilistic attack graph}
\end{figure}
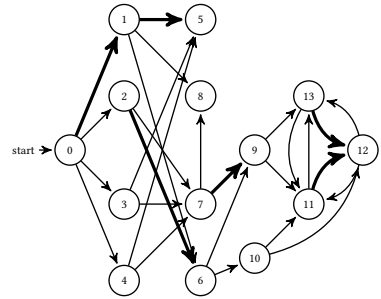

Figure~\ref{fig: simple probabilistic attack graph} shows a probabilistic attack graph with the target set   $F=\{10\}$ and the action set   $\{a,b,c,d\}$.  For clarity, the graph only shows the transition given action $a$ where a thick (resp. thin) arrow represents a high (resp. low) transition probability. For example, $P(0, a)=\{1:0.7, 2:0.1, 3:0.1, 4:0.1\}$ \footnote{The exact transition function is provided in the supplementary file.}.

Consider the set D = $\{11, 13\}$ of decoy states. Recall the defender's reward function is $R_1(s) = 1,$ for all $s \in D$. Assuming no resource is allocated to $D$ and all states in $D$ are sink states, then the attacker has a $60.33\%$ probability of reaching the target set $F$ from the initial state $0$. In the meantime, the defender's expected value is $0.149$. That is, with  probability $14.9\%$, the attacker will reach a decoy state in $D$ and the attack is terminated.

Given limited resource $\mathbf{1}^{\mathsf{T}}\vec{y} \le 3$, the decoy resource allocation yields $\vec{y}(11)=\vec{y}(13)=1.313$. Based on the given decoy resource allocation, the attacker has an $8.63\%$ probability of reaching the target set $F$ and the defender's expected reward is $0.653$ at initial state $0$. Thus, by assigning resources to decoys to attract the attacker, the defender reduces the attacker's probability of reaching the target state significantly ($85\%$ reduction) and improves the defender's value by 3.38 times.


\begin{figure}[ht]
    \centering
    \includegraphics[width=.5\linewidth]{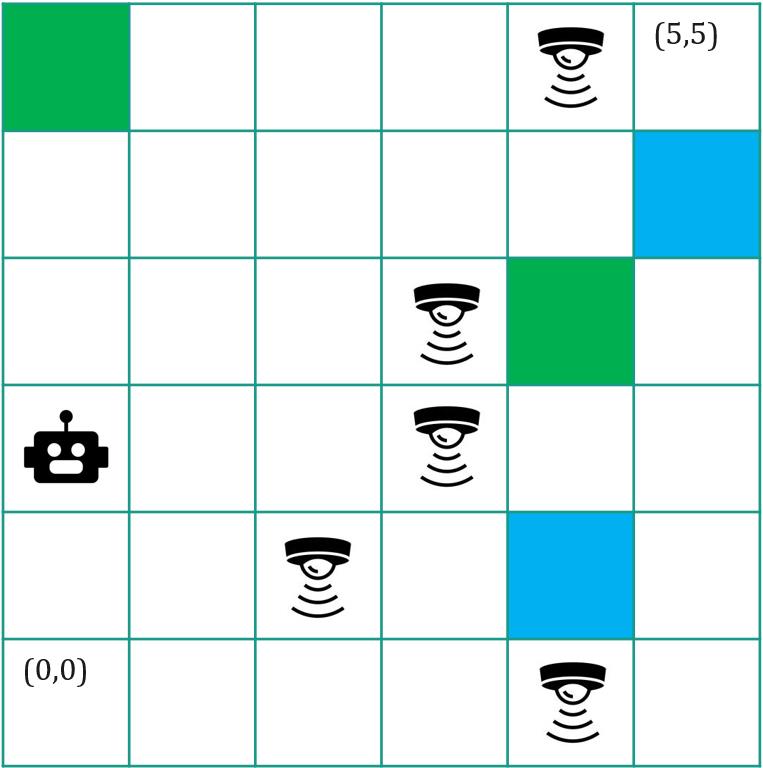}
    \caption{$6 \times 6$ gridworld example.}
    \label{fig:small gridworld}
\end{figure}

\begin{figure}[ht]
\begin{subfigure}{.45\textwidth}
    \centering
    \includegraphics[width=.8\linewidth]{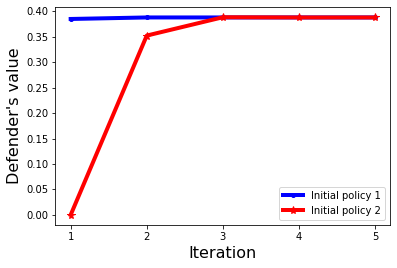}
    \caption{Converge  using different initial policies.}
    \label{fig:6*6 without action}
\end{subfigure}
\begin{subfigure}{.45\textwidth}
    \centering
    \includegraphics[width=.8\linewidth]{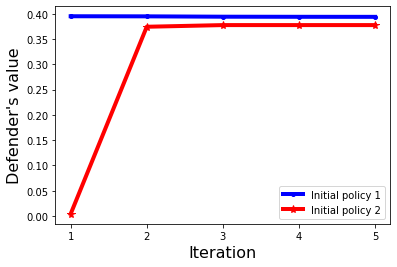}
    \caption{Converge   with decoy resource allocation and action reward modification.}
    \label{fig:6*6 with action}
\end{subfigure}
\label{fig: 6*6 result}
\caption{Defender's value converge trend in $6 \times 6$ gridworld example given $D = \{(1, 4), (4, 5)\}$.}
\end{figure}

\begin{figure}[ht]
    \centering
    \includegraphics[width=.8\linewidth]{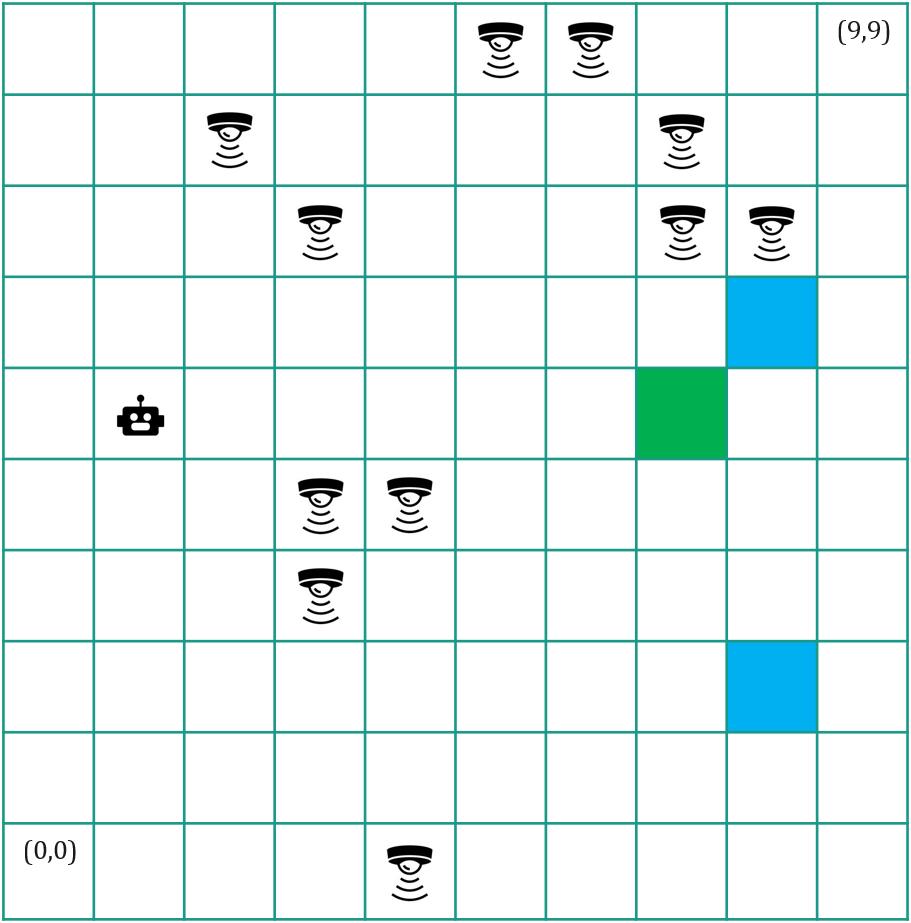}
    \caption{$10 \times 10$ gridworld example.}
    \label{fig:large gridworld}
\end{figure}

\begin{figure}[htbp]
    \centering
    \includegraphics[width=.8\linewidth]{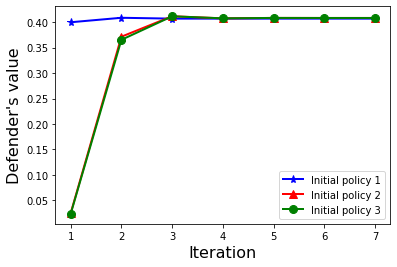}
    \caption{Defender's value converge trend in $10 \times 10$ gridworld. }
    \label{fig:def value in 10*10}
\end{figure}

Next, 
 we consider a robot motion planning problem in a stochastic $6 \times 6$ gridworld  shown in   Figure~\ref{fig:small gridworld}. The attacker/robot aims to reach a set of goal states while avoiding detection from the defender.  
 The attacker can move in four compass directions. Given an action, say, ``N'', the attacker enters the intended cell with $1 - 2\alpha$ probability, and enters the neighboring cells, which are  west and east cells with $\alpha$ probability. In our experiments, $\alpha$ is selected to be $0.1$. A state $(i,j)$ means the cell at row $i$ and column $j$.
 
The defender has deployed sensors shown in Figure~\ref{fig:small gridworld} to detect the presence of an attacker. Thus, once the attacker enters a sensor state, his task fails. The decoy set $D$ is given as blue cells and the target set $F$ is given as green cells.   

Given the initial state is at $(2, 0)$, which is indicated by the robot in the figure. We test the following three scenarios: No decoy resource allocation, decoy resource allocation only, decoy resource allocation together with reward modification. The result is shown in table~\ref{tab:6*6 result 1}. When we do not allocate resources to decoys, the attacker has a $98.98\%$ probability of reaching the target set $F$ while avoiding sensor states. And the defender's expected value is $3.56 \times 10^{-6}$. When the defender is allowed to allocate resources with a total budget of $4$ to decoys, the decoy resource allocation yields $\vec{y}((1, 4)) = 2.016, \vec{y}((4, 5)) = 1.826$.  The defender does not spend all decoy resources because of the use of  the logarithmic barrier function to enforce the constraint, when it is close to the upper bound, the log barrier function will work as a large penalty in gradient ascent.
 
Under the given resource allocation, the attacker has a $9.9\%$ probability of reaching the target set $F$, and the defender's expected value at the initial state is $0.3877$. In the decoy resource allocation and action reward modification experiment, the defender is allowed to modify all action rewards at state $(4, 4)$ and the action `N' reward at state $(4, 0), (4, 1)$ and $(4, 2)$. It turns out the defender allocates $1.938$ to decoy $(1, 4)$ and $1.734$ to decoy $(2, 5)$. Meanwhile, action `N'  reward at $(4, 0)$ is modified to $-1$ and the same action at $(4, 1)$ is modified to $-0.94$ and the action "N" reward at $(4, 2)$ is modified to $-0.904$, the defender will also modify the action reward of "W", "S", "N" at $(4, 4)$ to $-1$. Compare the decoy resource allocation result with the decoy resource allocation and action reward modification result. We find that by allowing action reward modification, the defender reduces the attacker's probability of reaching the target ($13.13\%$ reduction). In the meantime, the defender's expected value increases by $1.62\%$.

It is noted that due to the nonlinearity in the optimization problem,  the algorithm converges to different solutions under different initial conditions, as shown in Figures ~\ref{fig:6*6 without action} and ~\ref{fig:6*6 without action}. In the figures, the initial policy $1$ is generated by assuming the attacker receives the reward of $1$ if he reaches the decoy and receives a reward of $0$ when he reaches the target state. This is ideal for the defender's objective but is infeasible for the optimization problem  because in the attacker's perceptual planning problem, reaching the true target will always provide a reward of $1$ regardless of how many resources are allocated to decoys. 
The initial policy $2$ is randomly generated. In this experiment, the value of the objective function given different initial policies is close. 

In order to test how the decoy set $D$ influences the result. We re-allocate the position of decoys to $\{(0, 2), (5, 3)\}$. The result is shown in Table~\ref{tab:6*6 result 2}. Based on the new configuration, if we do not allocate decoy resources, the attacker reaches the target set with $98.97\%$ probability and the defender's value is $7.61 \times 10^{-8}$ at the initial state. If the defender can allocate resources to the decoys, our method yields $\vec{y}((0, 2)) = 1.141$ and $\vec{y}((5, 3)) = 1.0$. The attacker's probability of reaching the target set is $3.99\%$ and the defender's expected value is $0.6991$. If the defender is allowed to modify the same set of state-action rewards as she is in the previous example, our algorithm yields $\vec{y}((0, 2)) = 0.985$ and $\vec{y}((5, 3)) = 1.068$. Action `N'  reward at $(4, 0)$ is modified to $-1$ and the same action at $(4, 1)$ is modified to $-0.85$ and the action "N" reward at $(4, 2)$ is modified to $-0.081$, the defender will also modify the all action reward at $(4, 4)$ to $-1$. Under this configuration, the attacker's probability of reaching the target set is $0.286\%$ ($93\%$ reduction compared to only allocating decoy resources) and the defender's expected value is $0.7301$ ($4.4\%$ increase compared to only allocate decoy resources). By changing the configuration of set $D$, we show that the configuration of set $D$ influences the attacker's probability of reaching the target set and the defender's expected value: the second  set $D = \{(0,2), (5,3)\}$ appears to outperform the first set $D=\{(1,4), (4,5)\}$. 

\begin{table}[h]
\centering
\caption{Experiment result in $6 \times 6$ gridworld given $D = \{(1, 4), (4, 5)\}$.}
\resizebox{\columnwidth}{!}{%
\begin{tabular}{|c|c|c|c|}
\hline
\multicolumn{1}{|l|}{}                                                                     & \multicolumn{1}{l|}{No decoy  } & \multicolumn{1}{l|}{Decoy only} & \begin{tabular}[c]{@{}c@{}}Decoy and action reward \end{tabular} \\ \hline
\begin{tabular}[c]{@{}c@{}}Attacker's value\end{tabular} & 98.98\%                                         & 9.9\%                                        & 8.6\%                                                                                      \\ \hline
\begin{tabular}[c]{@{}c@{}}Defender's  value \end{tabular}      & 3.56 $\times$ $10^{-6}$                     & 0.3877                                       & 0.394                                                                                      \\ \hline
\end{tabular}
}

\label{tab:6*6 result 1}
\end{table}

\begin{table}[h]
\centering
\caption{Experiment result in $6 \times 6$ gridworld given $D = \{(0, 2), (5, 3)\}$.}
\resizebox{\columnwidth}{!}{%
\begin{tabular}{|c|c|c|c|}
\hline
\multicolumn{1}{|l|}{}                                                                     & \multicolumn{1}{l|}{No decoy} & \multicolumn{1}{l|}{Decoy only} & \begin{tabular}[c]{@{}c@{}}Decoy and action reward\end{tabular} \\ \hline
\begin{tabular}[c]{@{}c@{}}Attacker's value\end{tabular} & 98.97\%                                         & 3.99\%                                        & 0.286\%                                                                                      \\ \hline
\begin{tabular}[c]{@{}c@{}}Defender's  value \end{tabular}      & 7.61 $\times 10^{-8}$                     & 0.6991                                       & 0.7301                                                                                      \\ \hline
\end{tabular}
}
\label{tab:6*6 result 2}
\end{table}


Next, in order to test the scalability, we increase the gridworld size to $10 \times 10$ as shown in Figure~\ref{fig:large gridworld}. In the large gridworld example, we only do decoy resource allocation. The sensors, decoy set, and target set are represented using the same notation as the $6 \times 6$ gridworld. The defender's reward function is still $R_1(s) = 1$, for all $s \in D$. Assume the initial state is at $(5, 1)$. When the defender does not allocate decoy resources, the attacker's probability of reaching the target is $82.43\%$ and the defender's expected value at the initial state is $0.0024$. When the defender is allowed to allocate resources to decoys, our algorithm yields $\vec{y}((2, 8)) = 1.350, \vec{y}((6, 8)) = 1.235$. Under the given decoy resources, the attacker's probability of reaching the target decreases to $5.1\%$ ($94\%$ reduction), and the defender's expected value at the initial state increases to $0.4034$. We also test the defender's converging trend using different initial policies as shown in Figure~\ref{fig:def value in 10*10}. Initial policy $1$ is obtained similarly to initial policy $1$ in the $6 \times 6$ example. Initial policy $2$ and $3$ are randomly generated policies. From Figure~\ref{fig:def value in 10*10}, we observe that   different initial policies result in a similar converged value for the objective function. Considering the scalability of our algorithm, the computation time for the $10 \times 10$ gridworld example is $185.94$ seconds, while the computation time of the $6 \times     6$ example is $22.51$ seconds. The running time shows our algorithm can be extended to moderate problem sizes. It is noted that not only the state space size influences the running time but also the selection of decoys, the number of decoys influences the running time.


\section{Conclusion and Future Work}

We present a mathematical framework and algorithm for decoy allocation and reward modification in a proactive defense system. Our technical approach can be applied to many safety-critical systems where the probabilistic attack graphs are constructed from known vulnerabilities in a system. The formulation and solutions can be extended to a broad set of adversarial   interactions in which proactive defense with deception can be deployed.
In the future, we will consider more complex attack and defense objectives and investigate the decoy allocation given the uncertainty in the attacker's goal or capability. Apart from ``revealing the fake'' studied herein, we will also investigate how to ``conceal the truth'' by manipulating the attacker's perceptual reward of compromising true targets.

\bibliographystyle{ACM-Reference-Format.bst} 
\bibliography{refs.bib}

\end{document}